\tikzstyle{node}=[fill=black, draw=black, shape=circle, scale=0.5]
\tikzstyle{wnode}=[fill=white, draw=black, shape=circle, scale=0.5]
\tikzstyle{textbox}=[inner sep=2pt, shape=rectangle, fill=none]
\tikzstyle{textnode}=[inner sep=0mm, shape=circle, fill=white]
\tikzstyle{gnode}=[inner sep=0mm, minimum size=1mm, fill={rgb,255: red,221; green,221; blue,221}, draw={rgb,255: red,221; green,221; blue,221}, shape=circle]
\tikzstyle{refine}=[fill=black, draw=black, shape=regular polygon, regular polygon sides=3, rotate=180, scale=0.5]
\tikzstyle{coarsen}=[fill=white, draw=black, shape=regular polygon, regular polygon sides=3, scale=0.5]
\tikzstyle{bdytextbox}=[outer sep=2pt, fill=white, draw=black, shape=rectangle]
\tikzstyle{redbox}=[fill=white, draw=red, shape=rectangle, text=red]
\tikzstyle{bluecirc}=[inner sep=1mm, fill=white, draw={rgb,255: red,4; green,51; blue,255}, shape=circle, text={rgb,255: red,4; green,51; blue,255}]
\tikzstyle{rednode}=[fill=red, draw=red, shape=circle, scale=0.5]
\tikzstyle{new style 0}=[fill=white, draw=red, shape=circle, scale=0.5]
\tikzstyle{bluenode}=[fill={rgb,120: red,4; green,51; blue,120}, draw={rgb,120: red,4; green,51; blue,120}, shape=circle, scale=0.5]
\tikzstyle{yellownode}=[fill={rgb,255: red,255; green,210; blue,75}, draw={rgb,255: red,255; green,210; blue,75}, shape=circle, scale=0.5]
\tikzstyle{blacksq}=[fill=black, draw=black, shape=rectangle, scale=0.5]
\tikzstyle{bluetext}=[fill=none, draw=none, shape=rectangle, text={rgb,255: red,4; green,51; blue,255}]
\tikzstyle{reg}=[draw, fill=white, rounded rectangle, rounded rectangle left arc=none, minimum height=1em, minimum width=1em, node font={\scriptsize}]
\tikzstyle{coreg}=[draw, fill=white, rounded rectangle, rounded rectangle right arc=none, minimum height=1em, minimum width=1em, node font={\scriptsize}]
\tikzstyle{turquoisenode}=[fill={rgb,255: red,115; green,255; blue,239}, draw=black, shape=circle, scale=0.5]
\tikzstyle{resistor}=[R]
\tikzstyle{inductor}=[L]
\tikzstyle{capacitor}=[C]
\tikzstyle{voltage-source}=[american voltage source]
\tikzstyle{current-source}=[american current source]
\tikzstyle{edge}=[-, draw=black]
\tikzstyle{diredge}=[->, draw=black]
\tikzstyle{dashed edge}=[-, dashed, dash pattern=on 1pt off 1.5pt, draw=black]
\tikzstyle{dirdash}=[->, dashed, dash pattern=on 2pt off 0.5pt, draw=black]
\tikzstyle{mapsto}=[{|->}, draw=black]
\tikzstyle{gray diredge}=[draw={rgb,255: red,221; green,221; blue,221}, ->]
\tikzstyle{dark grey dirdash}=[->, dashed, dash pattern=on 2pt off 0.5pt, draw={rgb,255: red,81; green,81; blue,81}]
\tikzstyle{doubedge}=[-, draw=black, double=none, double distance=3pt, inner sep=0pt, thick]
\tikzstyle{thedge}=[-, line width=1.5pt, draw=black]
\tikzstyle{gray dashed}=[-, dashed, dash pattern=on 1pt off 1.5pt, draw={rgb,255: red,128; green,128; blue,128}]
\tikzstyle{rededge}=[-, draw=red]
\tikzstyle{gray edge}=[-, draw={rgb,255: red,128; green,128; blue,128}]
\tikzstyle{blthedge}=[-, thick, draw={rgb,255: red,4; green,51; blue,255}]
\tikzstyle{blthdash}=[-, dashed, dash pattern=on 1pt off 1.5pt, thick, draw={rgb,255: red,4; green,51; blue,255}]
\tikzstyle{dirrededge}=[draw=red, ->]
\newcommand{\cs}{\mathtt{cs}}
\newcommand{\row}{\mathtt{row}}
\title[Learning Closed Signal Flow Graphs]{Learning Closed Signal Flow Graphs}
  \author{\Name{Ekaterina Piotrovskaya} \Email{kate.piotrovskaya.21@ucl.ac.uk}\\
  \Name{Leo Lobski} \Email{leo.lobski.21@ucl.ac.uk}\\
   \Name{Fabio Zanasi} \Email{f.zanasi@ucl.ac.uk}\\
  \addr University College London, UK}
\begin{document}
\LinesNumbered
\maketitle

\begin{abstract}
We develop a learning algorithm for {\em closed signal flow graphs} -- a graphical model of signal transducers. The algorithm relies on the correspondence between closed signal flow graphs and {\em weighted finite automata} on a singleton alphabet. We demonstrate that this procedure results in a genuine reduction of complexity: our algorithm fares better than existing learning algorithms for weighted automata restricted to the case of a singleton alphabet.
\end{abstract}
\begin{keywords}
signal flow graph, automata learning, weighted automaton
\end{keywords}

\section{Introduction}

{\em Signal flow graphs} (SFG) are a graphical language for signal transducers, which play a foundational role in control theory and engineering~\citep{Shannon-42}. A signal flow graph is typically specified as a circuit, with gates corresponding to basic signal operations: addition, copying, amplification by a scalar, and delay. Moreover, signal flow graphs allow for the formation of feedback loops. From an expressiveness viewpoint, these models capture precisely rational functions: the Taylor expansion of the function may be regarded as the signal processed by the signal flow graph.

\begin{figure}[h]
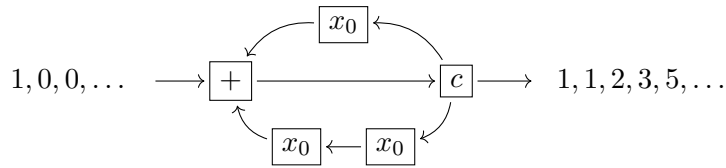

    \centering
    \tikzfig{sfg-example-fibonacci}
    \caption{\small The signal flow graph for the Fibonacci sequence. That means, received as input the sequence $1,0,0,\dots$ of signals, it outputs the Fibonacci numbers $1,1,2,3,5,\dots$. Note these are the coefficients of the formal power series expanding the rational function $\frac{1}{1-x-x^2}$. The semantics of SFG is further explained in Subsection~\ref{subsecsfg}.\label{fig:fibonacci}}
\end{figure}

In the last decade, there has been a renewed interest in the theory of signal flow graphs. For readers familiar with category theory, we note that these structures have been studied through the lenses of (co)algebra~\citep{rutten-tutorial,rutten-rational}, logic~\citep{milius-10}, and category theory~\citep{bsz-2017,survey-signal-flow,bsz-2014,baez-erbele}, with the aim of providing a compositional perspective on their behaviour. In particular, these works have established a formal correspondence (in terms of functors) between the syntax of signal flow graphs --- variously represented as a stream calculus, string diagrams, or a graph formalism --- and their semantics --- linear relations between streams.

One aspect that such theory leaves mostly to guesswork and intuition is how to infer the structure of the signal flow graph from the observed behaviour of the system it represents. More precisely, we are interested in the following scenario: given a ``black box" signal flow graph, we may observe the output behaviour only, but not the structure of the graph itself, which we wish to infer in an algorithmic manner.\footnote{Of course, one can only hope to learn the signal flow graph structure up to semantic equivalence. There is a purely equational theory axiomatising this equivalence, as studied in~\citet{bsz-2017,baez-erbele}.} 

Our contribution lies in studying how formal learning techniques~\citep{ANGLUIN198787}, as developed in the context of automata theory, have the potential to improve and systematise this process.  As a starting point, we will develop such a learning algorithm for the restricted case of {\em closed} signal flow graphs (cSFG), where no inputs are provided to the system. In order to adapt existing automata learning techniques, we wish to utilise the fact that cSFGs are semantically equivalent to weighted finite automata with a singleton alphabet.

\emph{Weighted finite automata} (WFA), first introduced in~\citet{schutzenberger1961definition}, can be seen as a generalisation of non-deterministic automata, with each transition having a weight (or a `cost') associated to it. WFAs have found applications in multiple areas ranging from speech recognition~\citep{mohri2005weighted} and image processing~\citep{culik1993image} to control theory~\citep{isidori1985nonlinear} and financial modelling~\citep{baier2009model}. If we restrict the alphabet of a WFA to a singleton one, we get a rather simple class of WFAs. It turns out that such automata are equivalent to cSFGs in the sense that both compute the same subclass of streams (namely, {\em rational streams}).

A learning algorithm for WFA has been introduced by~\citet{bergadano1996learning}, who presented its complexity bound and showed that WFAs over fields could be learned from a teacher. Another such algorithm for WFA was introduced in the PhD thesis of~\citet{gerco}, where it can be viewed as a general weighted adaptation of Angluin’s L$^*$~\citep{ANGLUIN198787} parametric on an arbitrary semiring. In fact, our algorithm is inspired by~\citet{gerco}, though more efficient and with an explicit calculation of the complexity bound.

The complexity bound for our algorithm turns out to be better than that for arbitrary weighted finite automata in the singleton alphabet case. The improvement is largely due to exploiting the linear algebraic structure of rational streams.

\section{Preliminaries}
We recall the notions of rational streams, signal flow graphs and weighted automata. Some of the results stated have references to~\citet{rutten-rational} as the paper nicely condenses them in one place; however, most of them have already been established in the literature prior to it. The proof of rational series having finite representations appears in~\citet{berstel1988rational}. It is further elaborated on with respect to rational streams in~\citet{rutten2001elements}, together with their property of having a finite number of linearly independent derivatives.

\subsection{Rational Streams}

A \emph{stream} over a field $k$ is an infinite sequence of elements of $k$. Define the set of $k$-valued streams as the function set $k^\omega$. Given a stream $\sigma \in k^\omega$, we denote its values by $\sigma = (\sigma_0, \sigma_1, \sigma_2,\dots)$, and call $\sigma_0$ the {\em initial value} of $\sigma$. The \emph{stream derivative} of $\sigma$ is defined as $\sigma' \coloneqq (\sigma_1, \sigma_2, \sigma_3,\dots)$.

We refer to a stream $\sigma$ as \emph{polynomial} if it has a finite number of non-zero elements, i.e.~if it is of the form $\sigma = (c_0, c_1, c_2,\dots, c_k, 0, 0, 0,..)$. A stream is \emph{rational} if it can be represented as a ratio of two polynomial streams, i.e.~a so-called convolution product of a polynomial stream with an inverse of another polynomial stream (see~\citet{rutten-tutorial} for details). As an example, take two polynomial streams $p = (1, 0, 0,..)$ and $q = (1, 0, -1, 0, 0,..)$; then $\sigma = p/q = (1, 0, 1, 0, 1, 0,..)$ is rational. Rational streams are ``periodic", in the sense that they only have a finite number of linearly independent stream derivatives:
\begin{lemma}[\citet{rutten-rational}]
\label{Linear independence}
For a rational stream $\sigma$, there exists an $n \ge 1$ such that $\{\sigma^{(0)}, \sigma^{(1)}, \sigma^{(2)},\dots, \sigma^{(n-1)}\}$ are linearly independent in the vector space $k^n$, and $\sigma^{(n)} = \sum_{i=0}^{n-1} c_i \times \sigma^{(i)}$, for some unique $c_0,\dots,c_{n-1} \in k$.
\end{lemma}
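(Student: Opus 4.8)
The plan is to reduce the statement to a single finiteness fact and then do elementary linear algebra. Let $V \coloneqq \mathrm{span}\{\sigma^{(i)} : i \ge 0\} \subseteq k^\omega$ be the $k$-linear span of all the stream derivatives of $\sigma$. I claim the whole lemma follows once we know that $V$ is finite-dimensional; the ``vector space $k^n$'' appearing in the statement is then just $V$, which, being $n$-dimensional, is isomorphic to $k^n$. So the only place where rationality of $\sigma$ is genuinely used is in establishing $\dim V < \infty$, and that is the crux of the argument.

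To show $V$ is finite-dimensional I would unfold the definition of rationality. Write $\sigma = p/q$ with $p = (p_0,\dots,p_m,0,0,\dots)$, $q = (q_0,\dots,q_d,0,0,\dots)$ polynomial and $q_0 \neq 0$ (so that the convolution inverse $q^{-1}$ exists), and rewrite $\sigma = p \ast q^{-1}$ as $q \ast \sigma = p$. Comparing $j$-th entries gives $\sum_{i=0}^{d} q_i\,\sigma_{j-i} = p_j$ for all $j \ge d$, and since $p_j = 0$ once $j > m$, this is the homogeneous recurrence $\sigma_j = -q_0^{-1}\sum_{i=1}^{d} q_i\,\sigma_{j-i}$, valid for all $j \ge N \coloneqq \max(d,\,m+1)$. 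Shifting indices, this says exactly that $\sigma^{(s)} = -q_0^{-1}\sum_{i=1}^{d} q_i\,\sigma^{(s-i)}$ (an identity of streams) for every $s \ge N$, where all indices $s-i$ are nonnegative since $N \ge d$. A routine induction on $s \ge N$ then shows $\sigma^{(s)} \in \mathrm{span}\{\sigma^{(0)},\dots,\sigma^{(N-1)}\}$, hence $V = \mathrm{span}\{\sigma^{(0)},\dots,\sigma^{(N-1)}\}$ has dimension at most $N$. (If one prefers to cite rather than compute, the same conclusion follows from the finite linear representation of rational streams from~\citet{berstel1988rational,rutten2001elements}: if $\sigma_j = \alpha M^j \beta$ with $M$ a $d \times d$ matrix, then $\sigma^{(i)}$ is represented by $(\alpha M^i, M, \beta)$, so $V$ is the image of $\mathrm{span}\{\alpha M^i : i \ge 0\} \subseteq k^{1\times d}$ under a fixed linear map, and Cayley--Hamilton even produces the recurrence directly.)

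Now let $n$ be the least natural number with $\sigma^{(n)} \in \mathrm{span}\{\sigma^{(0)},\dots,\sigma^{(n-1)}\}$; such an $n$ exists with $n \le \dim V$, since the $\dim V + 1$ vectors $\sigma^{(0)},\dots,\sigma^{(\dim V)}$ are linearly dependent and taking the top nonzero coefficient in a dependence exhibits some $\sigma^{(j)}$ in the span of its predecessors. This $n$ is the one required. First, $\{\sigma^{(0)},\dots,\sigma^{(n-1)}\}$ is linearly independent: in any nontrivial relation $\sum_{i=0}^{n-1} a_i\,\sigma^{(i)} = 0$, letting $j$ be the largest index with $a_j \neq 0$ would give $\sigma^{(j)} \in \mathrm{span}\{\sigma^{(0)},\dots,\sigma^{(j-1)}\}$ with $j < n$, contradicting minimality. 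Second, the scalars $c_0,\dots,c_{n-1}$ with $\sigma^{(n)} = \sum_i c_i\,\sigma^{(i)}$ are unique, since two such expressions would differ by a linear dependence among the (independent) $\sigma^{(i)}$. Finally $n \ge 1$ whenever $\sigma \neq 0$, because then $\sigma^{(0)} = \sigma \neq 0$ is not in $\mathrm{span}(\varnothing) = \{0\}$ (for $\sigma = 0$ the statement is degenerate and one simply assumes $\sigma \neq 0$).

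The main obstacle is the second step, the finiteness of $V$: conceptually it is just ``rational streams satisfy a linear recurrence'', but some care is needed because the recurrence only holds from index $N$ onward, so one still has to argue that the finitely many ``early'' derivatives $\sigma^{(0)},\dots,\sigma^{(N-1)}$ already span all the later ones. Everything after that is bookkeeping with maximal-index arguments and is routine.
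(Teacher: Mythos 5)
Your argument is, in substance, the standard proof from the sources this paper merely cites (the paper itself gives no proof of this lemma, attributing it to \citet{rutten-rational}): from $q \ast \sigma = p$ with $q_0 \neq 0$ you extract a linear recurrence valid from index $N = \max(d, m+1)$ onward, conclude that the span of all stream derivatives is finite-dimensional, and then take $n$ minimal with $\sigma^{(n)}$ in the span of its predecessors. The recurrence derivation, the strong induction, the maximal-index arguments, the uniqueness of the $c_i$, and the caveat about the zero stream (for which the statement is indeed degenerate) are all correct.

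The one genuine gap is your reinterpretation of ``linearly independent in the vector space $k^n$'' as independence inside the span $V \cong k^n$. What the paper actually needs from this lemma later (soundness of the closedness check on a table with $|E| \le n$, uniqueness of the coefficient function's solution, and hence the bounds $|S|, |E| \le n$ and at most $n$ iterations of the main loop) is the stronger, literal reading: the truncations $(\sigma_i, \dots, \sigma_{i+n-1}) \in k^n$ for $i = 0, \dots, n-1$ are linearly independent, i.e.\ the $n \times n$ Hankel matrix $(\sigma_{i+j})$ is nonsingular. This does not follow merely from $V$ being $n$-dimensional: an arbitrary $n$-dimensional subspace of $k^\omega$ need not inject into $k^n$ under truncation (already $\mathrm{span}\{(0,1,0,0,\dots)\}$ fails for $n=1$). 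It does follow from your setup with one extra observability-style argument. Note first that $U = \mathrm{span}\{\sigma^{(0)},\dots,\sigma^{(n-1)}\}$ is invariant under the shift $D$ (since $D\sigma^{(n-1)} = \sigma^{(n)} \in U$), so $U = V$ and $\dim U = n$. Set $W_m = \{w \in U : w_0 = \dots = w_{m-1} = 0\}$; then $W_{m+1} = \{w \in U : w_0 = 0,\ Dw \in W_m\}$, so the descending chain $U = W_0 \supseteq W_1 \supseteq \cdots$ stabilizes as soon as it stops strictly decreasing, and its full intersection is $\{0\}$; since $\dim U = n$, already $W_n = \{0\}$. Hence any element of $U$ vanishing on its first $n$ entries is the zero stream, so independence of $\sigma^{(0)},\dots,\sigma^{(n-1)}$ in $k^\omega$ transfers to independence of their truncations in $k^n$. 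Adding this paragraph closes the gap; without it, the complexity claims that invoke the lemma are not supported by what you proved.
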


Rational streams have long been used as a means to characterise finite circuits in the field of signal processing~\citep[p.~694]{lathi1998signal}, and can be defined via linear recurrent sequences or difference equations~\citep{rutten2001elements}.

\subsection{Signal Flow Graphs} \label{subsecsfg}

Given a field $k$, we think of its elements as basic units of signals. A {\em signal flow} can then be modelled as an infinite stream of elements from $k$. The transducers that can add, copy, multiply and delay signals are represented as {\em signal flow graphs}.

Formally, a signal flow graph (SFG) is a finite directed graph whose interior vertices are labelled with one of the labels $+$, $c$, $a$ or $x_a$, where $a\in k$. Moreover, a $+$-labelled vertex has at least two incoming edges and one outgoing edge, a $c$-labelled vertex has exactly one incoming edge and at least two outgoing edges, while $a$ and $x_a$-labelled vertices have exactly one incoming and one outgoing edge. Finally, we require that every feedback loop (i.e.~a cycle in the graph) passes through at least one $x_a$-labelled vertex. We draw the four basic \emph{generators} (i.e.~vertex types) of SFGs below; note that the input/output labels are not part of the data of a SFG, but define how it computes a stream function, as discussed below.
\ctikzfig{sfg-generators}
We refer to the above generators as an {\em adder}, a {\em copier}, a {\em multiplier} and a {\em register}. The vertices with no incoming edges are referred to as {\em inputs} and the vertices with no outgoing edges as {\em outputs}. SFGs are hence built by composing such vertices~\citep{survey-signal-flow}.  Writing $i,o\in\mathbb N$ for the number of inputs and outputs of a signal flow graph, the graph implements a function $\left(k^{\omega}\right)^i\rightarrow \left(k^{\omega}\right)^o$ as follows: at step $n$, the input vertices are assigned the $n$th element of the corresponding stream, the labelled vertices pass on the incoming values to the outgoing values as indicated in the above picture, with the additional requirement that the register value is updated to $x_b$.

Crucially, two SFGs can be composed by plugging (some of) the outputs of one into (some of) the inputs of the other. We have already given an example of a composite SFG computing the Fibonacci sequence (Figure~\ref{fig:fibonacci}).

The class of stream functions computed by SFGs is characterised by multiplication by rational streams~\citep{rutten-tutorial,bsz-2017}. The simplest case of no inputs and one output can be used to characterise rational streams.
\begin{definition}[cSFG]
\label{cSFGs}
A \emph{closed signal flow graph (cSFG)} is a SFG with no inputs and exactly one output.
\end{definition}
We give an example of a cSFG below, which computes the stream $\sigma_n = a+n+1$:
\ctikzfig{csfg-example}
The following result shows that cSFGs precisely capture the rational streams.
\begin{theorem}[\citet{rutten-rational}]\label{Stream to SFG}
A stream $\sigma\in k^{\omega}$ is rational if and only is there exists a cSFG that implements it.
\end{theorem}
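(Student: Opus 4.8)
The plan is to prove the two implications separately, the bridge in both cases being the classical fact that a stream is rational if and only if it satisfies a finite linear recurrence with constant coefficients --- equivalently, has only finitely many linearly independent derivatives. One direction of this equivalence is Lemma~\ref{Linear independence}; the converse is standard (see~\citet{berstel1988rational,rutten-rational}): if $\sigma_{n+m}=\sum_{i=0}^{m-1}c_i\,\sigma_{n+i}$ for all $n$, then multiplying the generating function $\sum_n\sigma_n x^n$ by $1-\sum_i c_i x^{m-i}$ leaves a polynomial, exhibiting $\sigma$ as a ratio of polynomial streams. I would invoke this equivalence freely and phrase both halves of the theorem in terms of it.

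\emph{A cSFG implements only rational streams.} Let $G$ be a cSFG with $d$ register vertices. Split each register into its incoming stub and its outgoing stub; this produces a graph $G_0$ which is acyclic --- acyclicity being exactly the content of the SFG side-condition that every cycle of $G$ runs through a register, since a cycle in $G_0$ would be a cycle of $G$ avoiding all registers. Because adders, copiers and multipliers act on streams pointwise and $k$-linearly, a topological sweep through $G_0$ shows that the output value and the next-step register values are $k$-linear functions of the current register values; restricting to one time step yields a matrix $A\in k^{d\times d}$ (current register contents $\mapsto$ next register contents) and a covector $c\in k^{1\times d}$ (current register contents $\mapsto$ output value). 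If $v\in k^{d}$ collects the initial register labels, an easy induction on $n$ gives that $G$ computes the stream $\sigma_n=c\,A^{n}\,v$. By Cayley--Hamilton $A$ annihilates a monic polynomial of degree $d$, so $\sigma$ obeys a linear recurrence of order $d$ and is therefore rational.

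\emph{Every rational stream is implemented by some cSFG.} Given $\sigma$ rational, fix the recurrence $\sigma^{(n)}=\sum_{i=0}^{n-1}c_i\,\sigma^{(i)}$ from Lemma~\ref{Linear independence}. I would build the ``companion form'' cSFG: a left-to-right cascade of $n$ registers with initial labels $\sigma_0,\sigma_1,\dots,\sigma_{n-1}$; the single output taken from the leftmost register; and each register $i$ tapped (via a copier), scaled by a multiplier labelled $c_i$, with the $n$ results combined by an adder whose output feeds the rightmost register. An induction on the number of elapsed steps $m$ shows the registers hold $(\sigma_m,\dots,\sigma_{m+n-1})$: the output is then $\sigma_m$, and the feedback produces $\sum_{i}c_i\sigma_{m+i}=\sigma_{m+n}$, which becomes the next content of the rightmost register. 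Every cycle passes through a cascade register, so this is a legitimate cSFG. (Alternatively, writing $\sigma=p\cdot q^{-1}$ with $p,q$ polynomial streams and $q_0\neq 0$, one could assemble the cSFG compositionally from a finite register chain realising $p$ and a single feedback loop realising $q^{-1}$.)

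I expect the main obstacle to be a matter of care rather than of ideas. In the first part it is justifying rigorously that splitting the registers yields an \emph{acyclic} linear circuit that therefore computes a single matrix $A$ and covector $c$ --- this is precisely where the ``no instantaneous feedback'' side-condition on SFGs earns its keep. In the second part it is the initialisation bookkeeping: checking that the cascade simultaneously reproduces the initial segment $\sigma_0,\dots,\sigma_{n-1}$ and enforces the recurrence thereafter. A few degenerate cases deserve a remark --- one must permit multipliers labelled $0$, so that the construction stays uniform when some $c_i$ vanish or when $\sigma$ is polynomial, and for $n=1$ the adder is simply omitted --- but these are settled by inspection and leave the argument untouched.
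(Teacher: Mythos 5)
Your proposal is correct, but note that the paper does not prove this statement at all: it is quoted as a known result of \citet{rutten-rational}, whose own route goes through linear systems of behavioural differential equations and weighted stream automata (essentially Theorem~\ref{Linear System to WSA}) rather than through a direct circuit argument. Your proof is a self-contained alternative in classical state-space style: for soundness you split each register into its two stubs, use the ``every cycle passes through a register'' side-condition to get an acyclic combinational part, extract $A$ and $c$ by linearity of the generators, and invoke Cayley--Hamilton to get a recurrence of order $d$; for completeness you build the companion-form realization. Both steps are sound --- the only points worth writing out carefully are the ones you already flag (that in a cSFG the sources of the split graph are exactly the register output-stubs, so the topological sweep really does express everything linearly in the register contents, and the copier/adder arity bookkeeping in the degenerate cases $n=1$ or $c_i=0$). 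What your approach buys is independence from the WSA detour and an explicit bound (number of registers $=$ order of the recurrence); what Rutten's buys is that the same linear-system machinery also yields Theorem~\ref{Linear System to WSA} and the coalgebraic picture the paper relies on. It is also worth observing that your completeness construction is not quite the one used in Section~\ref{sec:algorithm}: there the multipliers carry the weights $\sigma_j$ on the register outputs feeding a final adder and the initial register labels are solved for from linear equations (as in the proof of Theorem~\ref{thm:correctness}), whereas you initialize the registers directly with $\sigma_0,\dots,\sigma_{n-1}$ and tap the first register as output; the two are equivalent (transposed) realizations, and your variant makes the correctness induction slightly more transparent.
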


\subsection{Weighted Automata}
\emph{Weighted finite automata} (WFA) are a generalisation of non-deterministic automata, where transitions have a weight. A WFA over a field $k$ with an {\em alphabet} $A$ is a tuple $(Q, \langle o,t \rangle)$, where $Q$ is a finite set of {\em states}, $o: Q \rightarrow k$ is an {\em output function} (i.e.~the final weights) and $t: Q \rightarrow (k^Q)^A$ a {\em transition function}~\citep{bonchi2012coalgebraic}.

\begin{remark}
WFA are usually defined as a tuple $(Q, i, o, t)$, where $Q,o,t$ are as above and $i: Q \rightarrow k$ is the initial weights function. We omit the latter from our definition, as the notion of a stream represented by a state is equivalent to assigning input $1$ to it, and $0$ to all other states.
\end{remark}

In the singleton alphabet case, it is equivalent to work with {\em weighted stream automata (WSA)}. The data of a WSA is that of a WFA, except that the transition function has the type $t: Q \times Q \rightarrow k$~\citep{rutten-rational}. An example of a WSA is as below, where we do not draw the transitions labelled with $0$:
\[
\begin{tikzpicture}
    \node[state] (q_1) {$q_1$};
    \node[state] (q_2)  [right=of q_1] {$q_2$};
    \node (output1) [below =0.5 cm of q_1] {};
    \draw (q_1) edge[-implies, double, double distance=0.5mm] node[left] {$a+1$} (output1);
    \node (output2) [below =0.5 cm of q_2] {};
    \draw (q_2) edge[-implies, double, double distance=0.5mm] node[right] {$a+2$} (output2);
    \path[->] (q_1) edge [bend left]  node [above] {$1$} (q_2);
    \path[->] (q_2) edge [bend left] node [below] {$-1$} (q_1);
    \path[->] (q_2) edge [loop right] node [right] {$2$} (q_2);
\end{tikzpicture}
\]
The weight of a path is obtained by multiplying the weights of the individual transitions in the path and the output weight of the last state. We say that a state $q$ of a WSA {\em represents} a stream $S(q) = (s_0, s_1, s_2,\dots)$, where $s_n$ is the sum of all weights of paths of length $n$ starting at $q$. The state $q_1$ of the above WSA represents the stream $\sigma_n = a+n+1$, which we recognise as the same stream as in our example of a cSFG. In fact, we have the following result:
\begin{theorem}[\citet{rutten-rational}]
\label{Linear System to WSA}
A stream $\sigma \in k^\omega$ is rational if and only if there is a $k$-valued WSA $(Q, \langle o,t \rangle)$ and a state $q\in Q$ representing $\sigma$.
\end{theorem}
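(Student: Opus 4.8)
The plan is to establish the two implications separately. The ``only if'' direction is a direct construction from Lemma~\ref{Linear independence}: a rational stream has a finite linear recurrence among its derivatives, and this recurrence essentially \emph{is} the transition table of a WSA. The ``if'' direction runs the other way: from a WSA I would extract a finite-dimensional space of streams closed under the stream derivative, which forces $\sigma$ to have only finitely many linearly independent derivatives, hence to be rational by the (classical) converse of Lemma~\ref{Linear independence} recorded in~\citet{berstel1988rational,rutten2001elements}.

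For the forward direction, assume $\sigma$ is rational and take $n\ge 1$ and $c_0,\dots,c_{n-1}\in k$ as in Lemma~\ref{Linear independence}, so $\sigma^{(n)}=\sum_{i=0}^{n-1}c_i\,\sigma^{(i)}$. Build a WSA with states $Q=\{q_0,\dots,q_{n-1}\}$, output $o(q_i)=\sigma_i$ (the initial value of the derivative $\sigma^{(i)}$), and transitions $t(q_i,q_{i+1})=1$ for $0\le i<n-1$, $t(q_{n-1},q_j)=c_j$ for $0\le j\le n-1$, and $t(q_i,q_j)=0$ otherwise. The key claim is $S(q_i)=\sigma^{(i)}$ for all $i$, whence $q_0$ represents $\sigma$. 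I would prove $S(q_i)_m=(\sigma^{(i)})_m$ by induction on $m$: the case $m=0$ is the definition of $o$, and the step uses $S(q_i)_{m+1}=\sum_{j}t(q_i,q_j)\,S(q_j)_m$ (immediate from the path-weight description of $S$) together with the induction hypothesis and the shift/feedback shape of $t$, so that for $i<n-1$ the sum is $(\sigma^{(i+1)})_m=(\sigma^{(i)})_{m+1}$ and for $i=n-1$ it is $\big(\sum_j c_j\sigma^{(j)}\big)_m=(\sigma^{(n)})_m=(\sigma^{(n-1)})_{m+1}$.

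For the backward direction, let $(Q,\langle o,t\rangle)$ be a WSA and $q\in Q$ with $S(q)=\sigma$. Write $T\in k^{Q\times Q}$ for the matrix $T_{p,p'}=t(p,p')$ and $\mathbf o\in k^Q$ for the output vector. Unwinding the path-weight definition gives $S(q)_m=(T^m\mathbf o)_q$, and hence $\sigma^{(m)}=\Phi(T^m\mathbf o)$ where $\Phi\colon k^Q\to k^\omega$ is the linear map $w\mapsto\big((T^n w)_q\big)_{n\ge0}$. As $m$ ranges over $\mathbb N$, the vectors $T^m\mathbf o$ all lie in the subspace $\mathrm{span}\{\mathbf o,T\mathbf o,T^2\mathbf o,\dots\}\subseteq k^Q$, of dimension at most $|Q|$; pushing this forward along the linear map $\Phi$ shows the family $\{\sigma^{(m)}:m\ge0\}$ spans a subspace of $k^\omega$ of dimension at most $|Q|$. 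Thus $\sigma$ has only finitely many linearly independent derivatives, so it satisfies a finite linear recurrence with constant coefficients and is therefore rational. (Alternatively one can display the fraction explicitly: from $\sigma^{(n)}=\sum_{i<n}c_i\sigma^{(i)}$ one gets $\sigma=p/q$ with $q=(1,-c_{n-1},\dots,-c_0,0,0,\dots)$ and $p$ the polynomial stream pinned down by $\sigma_0,\dots,\sigma_{n-1}$.)

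I expect the backward direction to carry the real content: it leans on the converse of Lemma~\ref{Linear independence} (``finitely many independent derivatives $\Rightarrow$ rational''), which is standard but only cited, not proved, in the preliminaries; a fully self-contained argument would instead run the explicit-fraction computation sketched above. The forward direction is mostly bookkeeping, the only mild care needed being to organise the verification as an induction on the stream index rather than appealing loosely to uniqueness of solutions of stream differential equations.
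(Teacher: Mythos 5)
Your proof is correct, and it follows essentially the route the paper takes: the paper itself defers the proof to \citet{rutten-rational} and only sketches the ``only if'' construction right after the theorem statement (outputs $o(q_i)=\sigma_i$, shift transitions of weight $1$, feedback weights $c_0,\dots,c_{n-1}$ from Lemma~\ref{Linear independence}), which is exactly your forward direction, verified by a sound induction on the stream index. Your backward direction via the finite-dimensional span of $\{T^m\mathbf o\}$ (or the explicit $p/q$ computation) is the standard argument in the cited source, so no genuinely different approach is involved.
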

To construct a WSA from a rational stream $\sigma$, we first find the number of linearly independent derivatives of $\sigma$, call it $n$; then assign $\sigma_i$ to the output of each state $q_i$ (for $i = 0,\dots,n-1$) and use the coefficients $c_0,\dots, c_{n-1}$ (Lemma~\ref{Linear independence}) to define the transition function. This induces a specific shape of a WSA described in the next section.

\section{The Learning Algorithm}\label{sec:algorithm}
We begin by fixing the definitions needed by the algorithm. We largely follow the notation and terminology from~\citet{gerco}. Throughout this section, let $k$ be a fixed field and $\sigma\in k^{\omega}$ a stream. We think of $\sigma$ as the (unknown) stream computed by the cSFG we want to learn. By an \emph{observation table} we mean a pair $(S, E)$ of finite subsets of $\omega$.

\begin{definition}[Row function, last row function]
Given an observation table $(S, E)$, let $w$ be the largest element of $S$. Define the \emph{row function} $\mathtt{row}: S \rightarrow k^E$ by $\mathtt{row}(v)(e) = \sigma_{v+e}$ for $v \in S$, $e \in E$, and the \emph{last row function} $\mathtt{srow}: E \rightarrow k$ by $\mathtt{srow}(e) = \sigma_{w+e+1}$ for $e \in E$.
\end{definition}

\begin{definition}[Closedness]
\label{closedness}
We say that an observation table $(S, E)$ is \emph{closed} if there exist constants $c_s \in k$ such that $\mathtt{srow} = \sum_{s \in S} c_s \cdot \mathtt{row}(s)$.
\end{definition}

We next introduce the notion of the \emph{coefficient function} that outputs the solution to a system of linear equations if it exists, or returns $\bot$ otherwise. Note that it can be computed efficiently by Lemma~\ref{Linear independence}.

\begin{definition}[Coefficient function]
Let $(S, E)$ be an observation table. Define the \emph{coefficient function} $\cs \in \{\bot\} \cup k^S$ as the function $\cs: S \rightarrow k$ such that $\mathtt{srow} = \sum_{s \in S} \cs(s) \cdot \mathtt{row}(s)$ if the table is closed, and as the symbol $\bot$ otherwise.
\end{definition}

The algorithm assumes access to an oracle that answers the following types of queries: (1) \emph{Membership query}: given an index $n \in \omega$, an oracle replies with the $n$th element of the corresponding stream; and (2) \emph{Equivalence query}: given a cSFG, an oracle replies with ``yes'' if the cSFG constructed is correct, and with ``no'' otherwise.

We now have all the ingredients to define the learning algorithm (Algorithm~\ref{alg:cap}). We create an observation table, with $S$ and $E$ initially only containing $0$. We also keep a counter $i$, initially assigned $1$, to keep track of the smallest index not in $S$, which will improve the overall complexity of the algorithm. We then repeatedly check whether the table is closed (lines 4-8) and if not, add the index $i$ to $S$ and $E$ (i.e.~expanding the table proportionally) and increment the counter. 

We can now construct the hypothesis cSFG (lines 9-14). Thus suppose that the observation table is closed, so that we have a coefficient function $\cs: S \rightarrow k$. Note that the set $S=\{0,\dots,i-1\}$ has $i$ elements. We define the cSFG as follows:
\begin{itemize}
\item it has $i$ registers $r_0,\dots,r_{i-1}$, whose labels are the solutions to the linear equations in Theorem~\ref{thm:correctness},
\item it has $i$ copiers $q_0,\dots,q_{i-1}$,
\item it has $i$ adders $a_0,\dots,a_{i-2}$ and $a_{final}$,
\item it has $2i$ multipliers $m_0,\dots,m_{i-1}$ and $n_0,\dots,n_{i-1}$ with labels $l(m_j)\coloneqq\row(j)(0)=\sigma_{j}$ and $l(n_j)\coloneqq\cs(j)$,
\item for each $j\in\{0,\dots,i-1\}$, there are edges $(r_j,q_j)$, $(q_j,m_j)$ and $(m_j,a_{final})$,
\item for each $j\in\{0,\dots,i-2\}$, there are edges $(q_j,a_j)$ and $(a_j,r_{j+1})$,
\item there are edges $(q_{i-1},n_0)$ and $(n_0,r_0)$, and for each $j\in\{1,\dots,i-1\}$, edges $(q_{i-1},n_j)$ and $(n_j,a_j)$,
\item there is an edge $(a_{final},o)$, where $o$ is the unique output.
\end{itemize}
Below, we give a cSFG graph obtained from the above construction for $i=3$. We denote the label of the vertex inside a box, while its name is put next to it.
\ctikzfig{csfg-from-rational} \label{csfg3reg}

\begin{remark}
We note that, when constructing a hypothesis cSFG, we essentially get a WSA equivalent to it ``for free'': the values from the first column are assigned to the output of each state $s \in S$; each transition from state $s$ to $s+1$ is assigned value $1$, and the coefficients returned by the coefficient function are assigned to transitions from the state $i-1$ to each $s \in S$. We represent the resulting WSA below:
\[
\begin{tikzpicture}
    \node[state] (q_0) {$0$};
    \node[state] (q_1) [right=of q_0] {$1$};
    \node (dots) [right=of q_1] {$\cdots$};
    \node[state] (q_n) [right=of dots] {$i-1$};
    \node (output0) [below =0.7 cm of q_0] {};
    \draw (q_0) edge[-implies, double, double distance=0.5mm] node[left] {$\mathtt{row}(0)(0)=\sigma_0$} (output0);
    \node (output1) [below =0.7 cm of q_1] {};
    \draw (q_1) edge[-implies, double, double distance=0.5mm] node[right] {$\mathtt{row}(1)(0)=\sigma_1$} (output1);
    \node (outputn) [below =0.7 cm of q_n] {};
    \draw (q_n) edge[-implies, double, double distance=0.5mm] node[right] {$\mathtt{row}(i-1)(0)=\sigma_{i-1}$} (outputn);
    \path[->] (q_0) edge node[above] {$1$} (q_1);
    \path[->] (q_1) edge node[above] {$1$} (dots);
    \path[->] (dots) edge node[above] {$1$} (q_n);
    \path[->] (q_n) edge [bend right=35] node [above] {$c_1$} (q_1);
    \path[->] (q_n) edge [bend right=45] node [above] {$c_0$} (q_0);
    \path[->] (q_n) edge [loop right] node [right] {$c_{i-1}$} (q_n);
\end{tikzpicture}
\]
\end{remark}
Finally, we give the hypothesis cSFG to the oracle as an equivalence query, which results in oracle either rejecting it, which indicates that the corresponding stream has more linearly independent derivatives than we have guessed, thence $i$ is added to $S$ and $E$ and the counter is incremented and the outer while loop is executed again; or accepting it and hence we have learnt the cSFG.

\begin{algorithm2e}[ht]
\caption{Abstract learning algorithm for a cSFG over $k$ \label{alg:cap}}
$S,E \gets \{0\}$\\
$i \gets 1$\\
\While{true}{
    \While{$\cs$ = $\bot$}{
        $S \gets S \cup \{i\}$\\
        $E \gets E \cup \{i\}$\\
        $i \gets i + 1$
    }
    \For{$s \in S$}{
        $o(s) \gets \mathtt{row}(s)(0)$\\
        $t(i-1)(s) \gets \cs(s)$\\
        $t(s)(s+1) \gets 1$
    }
    $\text{Construct } cSFG$\\
    \eIf{$\text{EQ}(cSFG) = \text{no}$}{
        $S \gets S \cup \{i\}$\\
        $E \gets E \cup \{i\}$\\
        $i \gets i + 1$
    }{
        \Return{cSFG}
    }
}
\end{algorithm2e}

\subsection{Correctness} \label{subsec:corr}
We proceed to give a proof of the correctness of our algorithm. This amounts to showing that the constructed WFA and cSFG indeed compute the stream we expect them to.

Let us fix the following definitions. A state $q_R$ of a WFA $(Q,\langle o,t \rangle)$ is \emph{reachable} within $R$ steps from state $q_0$ at cost $C_R = \prod_{i=0}^{R} c_i$ if, starting from state $q_0$, there exist $q_1, q_2,\dots,q_{R-1}$ such that $t(q_i)(q_{i+1}) = c_i$ for $i = 0,\dots, R-1$, and $c_{R} = o(q_R)$. Then the stream \emph{computed} by the state $q_0$ is defined as $(s_0, s_1, s_2,\dots)$ such that the element $s_j$ is obtained by adding up the costs of all paths of length $j$ from the state $q_0$ to any reachable state.

\begin{theorem} \label{thm:correctness}
Let $\sigma$ be the hidden stream fixed at the beginning of Section~\ref{sec:algorithm}. The output cSFG returned by Algorithm~\ref{alg:cap} (and hence the corresponding WFA) compute $\sigma$.
\end{theorem}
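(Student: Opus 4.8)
\emph{Approach.} Since the equivalence oracle accepts only a cSFG that computes $\sigma$, the substance of the statement is threefold: that the construction is well defined (the register labels exist), that the constructed cSFG and the constructed WFA compute the same stream, and that this common stream can be characterised directly from a closed table and hence equals $\sigma$. I would handle all three by first analysing the WFA, whose path-counting semantics is elementary, and then transporting the analysis to the cSFG using the fact that the circuit is built to mirror the automaton: registers $r_s$ correspond to states $s$, the edges $(q_j,a_j)$ and $(a_j,r_{j+1})$ to the weight-$1$ transitions $s\mapsto s+1$, the multipliers $n_j$ labelled $\cs(j)$ to the feedback transitions out of state $i-1$, and the multipliers $m_j$ labelled $\sigma_j$ to the output weights. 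Write $S=E=\{0,\dots,i-1\}$ and $c_s\coloneqq\cs(s)$ for the table on which the algorithm halts, and let $\hat\sigma$ be the unique stream satisfying $\hat\sigma_n=\sigma_n$ for $0\le n\le i-1$ and $\hat\sigma_n=\sum_{s=0}^{i-1}c_s\,\hat\sigma_{n-i+s}$ for $n\ge i$. Part of the proof is precisely to exhibit this order-$i$ recurrence, equivalently the linear system defining the register labels.

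\emph{The two halves.} For the WFA, let $f_s(n)$ denote the sum of the weights of the length-$n$ paths from state $s$. Reading off the transition structure gives $f_s(0)=o(s)=\sigma_s$, $f_s(n)=f_{s+1}(n-1)$ for $s<i-1$ and $n\ge1$, and $f_{i-1}(n)=\sum_{s=0}^{i-1}c_s\,f_s(n-1)$ for $n\ge1$; unwinding the chain yields $f_0(n)=f_{i-1}(n-i+1)$ for $n\ge i-1$, and substituting the feedback equation produces exactly the order-$i$ recurrence above, while $f_0(0)=\sigma_0,\dots,f_0(i-1)=\sigma_{i-1}$. Hence $f_0=\hat\sigma$ by induction on $n$. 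For the cSFG, track the vector $v^{(n)}\in k^{S}$ of register contents at time $n$; the adders together with the feedback wiring give $v^{(n+1)}=Mv^{(n)}$ for a companion-type matrix $M$ assembled from the $c_s$, and the copiers and multipliers $m_j$ make the output at time $n$ equal to a fixed linear form $\langle u,v^{(n)}\rangle$ with $u$ built from the $\sigma_j$. Demanding that the first $i$ outputs be $\sigma_0,\dots,\sigma_{i-1}$ is then an $i\times i$ linear system for $v^{(0)}$ whose matrix is, up to reindexing, $\row=(\sigma_{s+e})_{s,e\in S}$; on termination $i$ equals the integer $n$ of Lemma~\ref{Linear independence} (were a table with fewer rows to be closed, the resulting cSFG would have fewer than $n$ linearly independent derivatives and be rejected by the equivalence oracle, so $i$ keeps increasing until it reaches $n$), so $\row$ is invertible, the register labels exist and are unique, and the cSFG then reproduces $\sigma_0,\dots,\sigma_{i-1}$ and, since $M$ encodes the same coefficients, obeys the same order-$i$ recurrence. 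So the cSFG computes $\hat\sigma$ as well.

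\emph{Identifying $\hat\sigma$ with $\sigma$; the main obstacle.} Closedness of $(S,E)$ unfolds to $\sigma_{i+e}=\sum_{s=0}^{i-1}c_s\,\sigma_{s+e}$ for every $e\in E$, i.e.\ the order-$i$ recurrence at the indices $i,\dots,2i-1$; combined with $i=n$, Lemma~\ref{Linear independence} identifies the $c_s$ with the coefficients of the unique linear recurrence satisfied by \emph{all} stream derivatives of $\sigma$, so the recurrence in fact holds at every index $\ge i$. As $\hat\sigma$ and $\sigma$ agree on $\{0,\dots,i-1\}$ and satisfy the same recurrence beyond that, $\hat\sigma=\sigma$; equivalently, the equivalence query that terminated the loop certifies this. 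I expect the cSFG half to be the main obstacle: carefully relating the \emph{timed} semantics of the circuit (each register emitting its current value and absorbing a new one every step, with feedback forced through a register by well-formedness) to the \emph{static} path-counting semantics of the automaton, pinning down $v^{(0)}$ via the linear system, and checking that the feedback loop contributes the coefficients $c_s$ with the correct delay. The automaton computation and the recurrence-propagation step are comparatively routine.
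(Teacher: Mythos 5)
Your proposal is correct and follows essentially the same route as the paper's proof: path-counting on the WFA (the weight-$1$ chain gives $\sigma_0,\dots,\sigma_{i-1}$, the feedback transitions give the order-$i$ recurrence), a register-level analysis of the cSFG whose initial register values are pinned down by an $i\times i$ linear system, and identification with $\sigma$ via Lemma~\ref{Linear independence}; the paper phrases the circuit analysis with stream equations $\tau^i=r^i+X\times\upsilon^i$ rather than your companion-matrix state-space form, but this is only a reformulation. If anything, you are more explicit than the paper on one point: you identify the system matrix for the register labels as the Hankel matrix $(\sigma_{s+e})_{s,e\in S}$ and argue its invertibility from $i=n$, whereas the paper simply asserts that the solution is obtained.
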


\begin{proof}
We first prove the statement for the WFA. We know the values $\sigma_0,\dots,\sigma_{n-1}$ from the membership queries. 
Hence, for the first $n$ elements of the stream computed by the state $q_0$ we have:
\begin{equation}
\begin{cases}
      s_0 = C_0 = o(q_0) = \sigma_0 \\
      s_1 = C_1 = t(q_0)(q_1) \times o(q_1) = 1 \times \sigma^{(1)}_0 = \sigma_1 \\
      \vdots \\
      s_{n-1} = C_{n-1} = t(q_0)(q_1) \times \dots \times t(q_{n-2})(q_{n-1}) \times o(q_{n-1}) = 1 \times \dots \times 1 \times \sigma^{(n-1)}_0 = \sigma_{n-1}
\end{cases} \tag*{}
\end{equation}
as expected. Until now, there existed a unique path of length $j$ from $q_0$ to $q_j$, $0 \le j \le n-1$. Let us then show how an element of the stream, namely $s_n$, is calculated when this isn't the case anymore, i.e.~for $j \ge n$:
$$s_n = \sum C_n = \sum_{i=0}^{n-1} t(q_0)(q_1) \times \dots \times t(q_{n-2})(q_{n-1}) \times t(q_{n-1})(q_i) \times o(q_i) = \sum_{i=0}^{n-1} c_i \times \sigma^{(i)}_0$$
which is precisely the definition of our rational stream $\sigma$. This concludes the proof for WFA.

Now, recall that our cSFG has $n$ registers. We can think of each register $r^i$, where $i = 0, \dots, n-1$, as having a certain stream $\upsilon^i$ feeding into it, and a certain stream $\tau^i$ coming out of it. We then have $n$ linear equations in the form $\tau^i = r^i + X \times \upsilon^i$. We obtain the following set of equations: 
\begin{equation}
\begin{cases}
      \tau^0 = r^0 + X \times (c_0 \times \tau^{n-1}) \\
      \tau^1 = r^1 + X \times (\tau^0 + c_1 \times \tau^{n-1}) \\
      \vdots \\
      \tau^{n-1} = r^{n-1} + X \times (\tau^{n-2} + c_{n-1} \times \tau^{n-1})
\end{cases} \tag*{}
\end{equation} 
Notice that we can rewrite each $\tau^i$ in terms of $\tau^{n-1}$ and substitute $\tau^{n-2}$ in the last equation. We then get $$\tau^{n-1} = (r^{n-1}, r^{n-2} + c_{n-1} \times \tau^{n-1}_0, r^{n-3} + c_{n-2} \times \tau^{n-1}_0 + c_{n-1} \times \tau^{n-1}_1, \dots)$$
We can see that the definition of the stream is recursive and depends on the values of $r^i$. We can calculate each of $\tau^i$ in the exact same manner. Recall that the constructed cSFG, based on its structure in the algorithm, computes $\sigma$ if $\sigma_0 \times \tau^0 + \sigma_1 \times \tau^1 + \dots + \sigma_{n-1} \times \tau^{n-1} = \sigma$. Hence, by adding up the corresponding elements $\tau^i_j$ of $n$ streams and equating the result to the known $\sigma_j$, $j = 0, \dots, n-1$, we obtain $n$ equations in $n$ unknowns $r^0,\dots,r^{n-1}$. This way, we have $\sigma_0 \times r^0 + \dots + \sigma_{n-1} \times r^{n-1} = \sigma_0$, and so on. Having obtained the solution, we now know the initial register values of our cSFG, and hence, by definition, the resulting cSFG computes the expected stream $\sigma$.

\end{proof}

\section{Complexity}
We first note that the algorithm always terminates. Computing the coefficient function consists of checking for linear independence of stream derivatives, of which there are finitely many (Lemma~\ref{Linear independence}). Hence, linear dependence is guaranteed to be reached within a finite number of the main loop executions.

Let $n$ be the number of linearly independent derivatives of the stream that the cSFG computes (which is also the number of registers in the cSFG we will learn). Then we have that (1) $|S| \le n$, (2) $|E| \le n$, and (3) the main loop in the algorithm is repeated at most $n$ times. All of these hold by Lemma~\ref{Linear independence}.

Now, let us evaluate complexity bounds of parts of the algorithm inside the main loop. We begin by determining the complexity of the coefficient function (line 4). Since it checks whether the row $i$ is a linear combination of rows corresponding to elements of $S$, each call amounts to solving a system of at most $n$ linear equations in at most $n$ unknowns, as $S$ and $E$ expand evenly. This can be done using Gauss's method with complexity $O(n^3)$~\citep[p.~12]{boyd2018introduction,Farebrother}. Since we make at most $n$ calls to the coefficient function throughout the algorithm (as there are $n$ linearly independent derivatives), we need to solve at most $n$ such systems of equations. So the overall complexity of closing the table is $O(n^4)$. As a side note, the counter on line 7 of the algorithm is introduced precisely to enable calling the coefficient function as infrequently as possible, as it is a task of, as we will deduce later, highest computational complexity.

Next, we consider the complexity of filling the table. In the worst case, the table is of size $|S||E| = n^2$. However, since repeated table entries do not need to be re-queried and each membership query for filling one entry of the table corresponds to a stream index which is at most $n+n = 2n$, the number of membership queries and hence the complexity of filling the table is $O(n)$. As the main loop in the algorithm is executed at most $n$ times, the number of equivalence queries is $O(n)$.

Finally, let us evaluate the complexity of constructing a cSFG (line 14). The multiplier values come from the first $n$ elements of the stream and from the coefficients $c_0,\dots,c_{n-1}$ from Lemma~\ref{Linear independence}; all we have to do now is find the initial register values. There are $n$ registers, so we have to solve $n$ linear equations in $n$ unknowns (this appears in the proof of Theorem~\ref{thm:correctness}). It can be done via Gauss's method of complexity $O(n^3)$.

We conclude that the task of highest complexity is closing the table, $O(n^4)$, which is hence also the overall complexity of our algorithm; i.e. it runs in polynomial time.

\begin{remark} \label{compimrp}
We note that the bound can be marginally improved by a slight modification of the algorithm. Instead of expanding $S$ and $E$ linearly, we could do so exponentially, i.e.~by doubling their sizes whenever the table is not closed and after each unsuccessful equivalence query. This way, we would make at most $O(log \text{ }n)$ equivalence queries and would only make $O(log \text{ } n)$ calls to the coefficient function. The overall complexity of the algorithm would then be $O(n^3log \text{ }n)$.
\end{remark}

Our algorithm has a strictly lower complexity bound than those of learning algorithms for WFAs restricted to the singleton alphabet case. Denote by $m$ the size of the longest counterexample provided by oracles in such algorithms. In~\cite{bergadano1996learning}, the bound is $O(n^5m^2)$ and in~\cite{beimel2000learning} the bound is $O(n^{3.376}+mn^3)$. In the case of the learning algorithms for \emph{weighted tree automata} (WTA) -- a generalisation of WFA, the algorithm by~\citet{maletti2007learning} for deterministic WTA simplifies in our case to having the complexity $O(n^3(n+m))$. We note that the lack of the need for counterexamples in our algorithm allows for the improvement in complexity, as counterexamples are provided non-deterministically and can be arbitrarily large. This, as well as the procedure outlined in Remark~\ref{compimrp}, is the case precisely due to the rational streams' property of having a finite number of linearly independent derivatives.

We also calculated the computational complexity of the algorithm in~\cite{gerco}. The algorithm is for general WFA and relies on the existence of the so-called closedness strategy that checks whether an observation table is closed; it only checks for closedness and not for consistency, as the counterexamples are handled in a way that the constructed observation table is always consistent. Said closedness strategy is a rather generic function, and in order to be able to compare the algorithms, we assume that in the case of a singleton alphabet it would mirror the behaviour of our coefficient function; it is, just like in our algorithm, the task of highest complexity. Hence, as $|E|$ is bounded by $m+1$ (it only expands from suffixes of counterexamples) and $|S|$ is bounded by $n$, we have that the complexity of solving at most $m+1$ equations in at most $n$ variables is $O(n(m+1)\text{ min}(n, m+1)) = O(nm\text{ min}(n,m))$. Since $\cs$ is called when assigning values to the transition function for each state whenever a hypothesis is constructed, and $|S| \le n$ and the outer loop is repeated at most $n$ times, the overall complexity of the algorithm is $O(n^3m\min(n,m))$. However, if we assume that $\cs$ is never recomputed for rows in $S$ during the hypothesis construction even after adding counterexamples to $E$, the complexity bound is $O(n^2m\min(n,m))$. In both cases, our algorithm has a better bound -- as noted earlier, $m$ can be arbitrarily large. We thus conclude that our algorithm genuinely uses the structure of cSFGs, rather than just restricting a generic algorithm for WFAs.

\section{Concluding Remarks}
The main contribution of this work is establishing a learning algorithm for closed signal flow graphs and calculating its computational complexity, considering both the time complexity and the number of queries made. To the best of our knowledge, this work introduces the first algorithm of its kind that allows for learning closed signal flow graphs. Our results show that the proposed algorithm has a better complexity than that for arbitrary weighted automata in~\citet{gerco}, even if the latter one is restricted to WFA over a singleton alphabet. It also has a better computational complexity than learning algorithms for WFA introduced in~\citet{bergadano1996learning} and in~\cite{beimel2000learning}.

An obvious direction for future work is to find a learning algorithm for all SFGs, but that would require proving their equivalence to some family of automata first. In their work,~\citet{basold2014co} show the equivalence between open SFGs and Mealy Machines; the authors of this paper would also wish to establish such equivalence between SFGs and (some subset of) WFAs. An interesting related question is whether learning respects the compositional structure of SFGs: is learning a composite SFG equivalent to learning its parts and composing back?

Automata learning has extensive applications in areas such as network protocol analysis~\citep{comparetti2009prospex}, software verification~\citep{peled1999black} and natural language processing~\citep{knight2009applications}. Learning signal flow graphs opens the door to exploring similar techniques in control theory as well as in the design and analysis of related families of systems, such as digital circuits~\citep{ghica2024fully}.

\section*{Acknowledgements}
The authors would like to thank Wojciech Różowski for reading and providing helpful feedback on this paper. The authors would also like to thank the anonymous reviewers of the Learning and Automata workshop (LearnAut 2024) for their detailed comments and suggestions, especially the ones regarding the reduction in complexity of the algorithm.

\bibliography{bibliography}

\end{document}